\documentclass[12pt,amsfonts]{article}
\usepackage[utf8]{inputenc}
\usepackage[T1]{fontenc}
\usepackage{tikz}
\usepackage[english]{babel}
\usepackage{mathrsfs}
\usepackage{amssymb}
\usepackage[cmex10]{amsmath}
\interdisplaylinepenalty=2500
\usepackage{amsthm}
\usepackage{wrapfig}
\usepackage{latexsym}
\usepackage{amsfonts}
\usepackage{color}
\usepackage[font=small, labelfont=bf]{caption}
\setcounter{secnumdepth}{4}
\usepackage{ctable}
\usepackage{floatrow}
\usepackage[colorlinks=true, citecolor=red, linkcolor=blue]{hyperref}
\usepackage{graphicx}
\usepackage{subfig}
\usepackage{verbatim}
\usepackage{epstopdf}
\usepackage{float}
\numberwithin{equation}{section}
\newtheorem{theo}{Theorem}[section]

\newtheorem{prop}[theo]{Proposition}

\theoremstyle{remark}

\newcommand{\be}{\begin{eqnarray*}}
    \newcommand{\ee}{\end{eqnarray*}}
\newcommand{\ben}{\begin{eqnarray}}
\newcommand{\een}{\end{eqnarray}}

\newcommand{\lb}[1]{\left[\begin{array}{#1}}
    \newcommand{\rb}{\end{array}\right]}
\newcommand{\lp}[1]{\left(\begin{array}{#1}}
    \newcommand{\rp}{\end{array}\right)}

\newcommand{\leftd}[1]{\left\{\begin{array}{#1}}
    \newcommand{\rightd}{\end{array}\right.}

\def\B {\mathbf{B}}

\def\M {\mathbf{M}}

\def\S {\mathbf{S}}

\def\b {\mathbf{b}}

\def\p {\mathbf{p}}

\def\s {\mathbf{s}}

\def\Rb {\mathbb{R}}

\usepackage[title,toc,page,header]{appendix}

\begin{document}

\renewcommand{\thefootnote}{\arabic{footnote}}

\begin{center}
{\Large \textbf{Parametrizations, weights, and optimal prediction: Part 1}} \\[0pt]
~\\[0pt]
Azzouz Dermoune\footnote{
 Laboratoire Paul Painlev\'e,
USTL-UMR-CNRS 8524. UFR de Math\'ematiques, B\^at. M2. 59655
Villeneuve d'Ascq C\'edex, France. Email:
\texttt{azzouz.dermoune@univ-lille1.fr}}, Khalifa
Es-Sebaiy\footnote{ Cadi Ayyad University, Marrakesh, Morocco.
E-mail: \texttt{k.essebaiy@uca.ma}}*,  Mohammed Es.Sebaiy\footnote{
Cadi Ayyad University, Marrakesh, Morocco. E-mail:
\texttt{mohammedsebaiy@gmail.com}} and Jabrane Moustaaid\footnote{ Cadi Ayyad University, Marrakesh,
 Morocco. E-mail: \texttt{jabrane.mst@gmail.com}\\ * Corresponding author}
\\[0pt]
\textit{  Lille University   and Cadi Ayyad University }\\[0pt]
~\\[0pt]
\end{center}

\begin{abstract}
We consider the problem of the annual mean temperature prediction.
The years taken into account and the corresponding annual mean
temperatures are denoted by $0,\hdots, n$ and $t_0$, $\hdots$,
$t_n$, respectively. We propose to predict the temperature $t_{n+1}$
using the data $t_0$, $\hdots$, $t_n$. For each $0\leq l\leq n$ and
each parametrization $\Theta^{(l)}$ of the  Euclidean space
$\Rb^{l+1}$ we construct a list of weights for the data
$\{t_0,\hdots, t_l\}$ based on the rows of $\Theta^{(l)}$ which are
correlated with the constant trend. Using these weights we define a
list of predictors of $t_{l+1}$ from the data $t_0$, $\hdots$,
$t_l$. We analyse how the parametrization affects the prediction,
and provide three optimality criteria for the selection of weights
and parametrization. We illustrate our results for the annual mean
temperature of France and Morocco.
\end{abstract}

{\bf Keyword:} Parametrization, basis,
cubic spline, climate change detection.

\section{Motivation}
We consider the problem of the annual mean temperature prediction.
The years taken into account and the corresponding annual mean temperatures are denoted
by $0,\hdots, n$ and $t_0$, $\hdots$, $t_n$, respectively.
We model the behavior of the temperature $i\to t_i:=s(i)$ by
the column vector $\s^{(n)}=(s(0), \ldots, s(n))^\top\in\Rb^{n+1}$.
The aim is to predict the temperature $s(n+1)$ at the year $n+1$.

For each $0\leq l\leq n$ and each parametrization
$\Theta^{(l)}=(\theta_{ji}^{(l)}:\quad i,j=0, \hdots, l)$ of the
Euclidean space $\Rb^{l+1}$ we construct a list of weights for the
data $\{t_0,\hdots, t_l\}$ based on the row $\theta_j^{(l)}$ of
$\Theta^{(l)}$ which is correlated with the constant trend $\{{\bf
1}^{(l)}\}^\top=(1, \hdots, 1)$, i.e., $\theta_j^{(l)}{\bf
1}^{(l)}\neq 0$. We analyze how the parametrization $\Theta^{(l)}$
affects the prediction. We also propose a list of criteria for
selecting optimal parametrization and weights. We illustrate our
results for the annual mean temperature of France and Morocco.

The present paper is the first part of a list of works in preparation.
These works are directly related to \cite{DP1}, \cite{DP2} and \cite{DRW},
see also \cite{DDR1}, \cite{DDR2}, \cite{DW}.

\section{Parametrization}
Let $0\leq l\leq n$ be an integer and $\Theta^{(l)}$ be any invertible $(l+1)\times(l+1)$ real matrix.
Its $j$-th row is denoted by $\theta_j^{(l)}$ and then its entry $(j,i)$
is equal to $\theta_{ji}^{(l)}$, with $i,j=0, \ldots, l$.
Its inverse $\{\Theta^{(l)}\}^{-1}$ is denoted by $\B^{(l)}$. The $j$-th column of
$\B^{(l)}$ is denoted by $\b_j^{(l)}$ and then its entry $(i,j)$ is equal to
$b_{ij}^{(l)}$. Let $\s^{(l)}=(s(0), \ldots, s(l))^\top\in \Rb^{l+1}$ be any column vector. The equality
\be
\s^{(l)}=\B^{(l)}\Theta^{(l)}\s^{(l)}
\ee
tells us that
\ben
\s^{(l)}=\sum_{j=0}^l\theta_j^{(l)}\s^{(l)}\,\b_j^{(l)}.
\label{parametrization}
\een
Hence the columns $[\b_0^{(l)}, \ldots, \b_l^{(l)}]$ of the matrix $\B^{(l)}$ form
a basis of $\Rb^{l+1}$, and $(\theta_0^{(l)}\s^{(l)}, \ldots,\theta_l^{(l)}\s^{(l)})$ are the coordinates
of the vector $\s^{(l)}$ in the basis $\B^{(l)}$.

\begin{prop} \label{Il}Let ${\bf 1}^{(l)}=(1, \hdots, 1)^\top\in\Rb^{l+1}$
denotes the constant trend written as column vector, and $I(l)=\{j:\quad \theta_j^{(l)}{\bf 1}^{(l)}\neq 0\}$.
We have for $i=0, \ldots, l$,
\be
\sum_{j\in I(l)}\theta_{j}^{(l)}{\bf 1}^{(l)}b_{ij}^{(l)}=1,\\
s(i)=\sum_{j\in I(l)}\theta_j^{(l)}\s^{(l)}\,b_{ij}^{(l)}+\sum_{j\notin I(l)}\theta_j^{(l)}\s^{(l)}\,b_{ij}^{(l)}.
\ee
\end{prop}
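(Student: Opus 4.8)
The plan is to derive both identities directly from the parametrization formula \eqref{parametrization}, reading off the statement coordinate by coordinate. The crucial observation is that \eqref{parametrization} is nothing but a restatement of $\B^{(l)}\Theta^{(l)}=\I$, and therefore holds for \emph{every} column vector in $\Rb^{l+1}$, not only for the particular $\s^{(l)}$ encoding the temperatures. This freedom to substitute an arbitrary vector is the single idea that drives the whole argument.

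For the first identity, I would apply \eqref{parametrization} to the constant trend ${\bf 1}^{(l)}$ in place of $\s^{(l)}$, giving ${\bf 1}^{(l)}=\sum_{j=0}^{l}(\theta_j^{(l)}{\bf 1}^{(l)})\,\b_j^{(l)}$. By the very definition of $I(l)$, each scalar $\theta_j^{(l)}{\bf 1}^{(l)}$ vanishes whenever $j\notin I(l)$, so the sum collapses to $\sum_{j\in I(l)}(\theta_j^{(l)}{\bf 1}^{(l)})\,\b_j^{(l)}$. Extracting the $i$-th coordinate of both sides, and recalling that the $i$-th entry of ${\bf 1}^{(l)}$ equals $1$ while the $i$-th entry of $\b_j^{(l)}$ equals $b_{ij}^{(l)}$, yields exactly $\sum_{j\in I(l)}\theta_j^{(l)}{\bf 1}^{(l)}\,b_{ij}^{(l)}=1$.

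For the second identity I would simply take the $i$-th coordinate of \eqref{parametrization} as it stands, obtaining $s(i)=\sum_{j=0}^{l}\theta_j^{(l)}\s^{(l)}\,b_{ij}^{(l)}$, and then partition the index set $\{0,\ldots,l\}$ into $I(l)$ and its complement. This is purely a bookkeeping split of a single finite sum and requires no additional input.

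I do not anticipate a genuine obstacle here: the statement is essentially a coordinate-wise reading of the basis expansion already established just before the proposition. The only point deserving care is the first identity, where one must remember that the terms indexed by $j\notin I(l)$ drop out \emph{because} of the defining condition $\theta_j^{(l)}{\bf 1}^{(l)}=0$, so that the restricted sum over $I(l)$ alone still reconstructs the full constant vector.
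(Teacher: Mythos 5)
Your argument is correct and is essentially the paper's own proof: the authors likewise obtain $1=\sum_{j=0}^l\theta_j^{(l)}{\bf 1}^{(l)}\,b_{ij}^{(l)}=\sum_{j\in I(l)}\theta_j^{(l)}{\bf 1}^{(l)}\,b_{ij}^{(l)}$ by applying the basis expansion to the constant trend and dropping the vanishing terms, and the second identity is the same bookkeeping split of \eqref{parametrization}. You merely spell out the substitution step more explicitly than the paper does.
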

\begin{proof} It is the consequence of the equality
\be
1=\sum_{j=0}^l\theta_j^{(l)}{\bf 1}^{(l)}\,b_{ij}^{(l)}\\
=\sum_{j\in I(l)}\theta_j^{(l)}{\bf 1}^{(l)}\,b_{ij}^{(l)},
\ee
with $i=0, \hdots, l$.
\end{proof}
If $s(i)$ oscillates around some constant $c$, then $\sum_{j\notin I(l)}\theta_j^{(l)}\s^{(l)}\,b_{ij}^{(l)}$
oscillates around 0. Hence the component $\sum_{j\in I(l)}\theta_j^{(l)}\s^{(l)}\,b_{ij}^{(l)}$ seems to be
the bulk component of $s(i)$, and  $\sum_{j\notin I(l)}\theta_j^{(l)}\s^{(l)}\,b_{ij}^{(l)}$
its residual component. Roughly speaking, the most important coordinates are those correlated with the constant trend
${\bf 1}^{(l)}$, i.e., $(\theta_j^{(l)}:\quad j\in I(l))$.

\section{Conservative rows and selection criteria}
\subsection{Conservative rows}
The row $\p=(p_0, \ldots, p_l)\in\Rb^{l+1}$ is conservative
if
\be
\sum_{i=0}^l p_i=1.
\ee
If $p_i\geq 0$ for all $i$, then $\p$ is a probability distribution on the set $\{0,\hdots, l\}$.
The set of conservative rows is denoted by
\be
M_1(\{0, \hdots, l\})=\{\p\in \Rb^{l+1}:\quad \sum_{i=0}^lp_i=1\}.
\ee
The mean and the variance of $\s^{(l)}$ w.r.t. to $\p$ are defined respectively by
\be
m_1(\p)=\p\s^{(l)}=\sum_{i=0}^lp_is(i),\\
\mbox{var}(\p)=\sum_{i=0}^lp_i|s(i)-m_1(\p)|^2.
\ee
We have the famous equality
\be
\sum_{i=0}^lp_ix_i^2=m_1^2(\p)+\mbox{var}(\p).
\ee
Observe also as in the probabilistic case, the minimizer
\be
\arg\min\{\sum_{i=0}^lp_i|s(i)-a|^2:\quad a\in\Rb\}=m_1(\p),
\ee
and the error
\be
\min\{\sum_{i=0}^lp_i|s(i)-a|^2:\quad a\in\Rb\}=\mbox{var}(\p).
\ee
\subsection{Selection criterion}
The set $\mathcal{P}$ contains a finite number of parametrizations of the Euclidean spaces $(\Rb^{2}, \hdots, \Rb^{n+1})$. An element of $\mathcal{P}$ is
a parametrization $\Theta:=(\Theta^{(1)}, \hdots, \Theta^{(n)})$
of the Euclidean spaces $(\Rb^{2}, \hdots, \Rb^{n+1})$.

Let us give for each $1\leq l\leq n$ and each parametrization $\Theta^{(l)}$
a finite subset $W(\Theta^{(l)})$ of the set of conservative rows
$M_1(\{0, \hdots, l\})$. We get the subset
$W(\Theta):=W(\Theta^{(1)})\times \hdots\times W(\Theta^{(n)})$
of $M_1(\{0,1\})\times \hdots \times M_1(\{0, \hdots, n\})$.

A selection criterion $S$ picks a unique element
\be
SW(\Theta)=(SW(\Theta^{(1)}), \hdots, SW(\Theta^{(n)}))\in M_1(\{0,1\})\times \hdots \times M_1(\{0, \hdots, n\})
\ee
from the set $W(\Theta)$.

\subsection{Prediction cost}
We propose for each $l=1, \hdots, n$,
\be
SW(\Theta^{(l)})\s^{(l)}
\ee
as a prediction of $s(l+1)$. The cost of these predictors for $q=1,2,+\infty$, and $L<n$ fixed,
is measured by
\be
&&\mbox{Cost}(SW(\Theta),q)=\frac{\sum_{l=L}^{n-1}|s(l+1)-SW(\Theta^{(l)})\s^{(l)}|^q}{n-L},\quad q=1,2,\\
&&:=\max\{|s(l+1)-SW(\Theta^{(l)})\s^{(l)}|:\quad L\leq l\leq n-1\},\quad q=+\infty.
\ee
Let $\mathcal{S}$ be a finite set of selection criteria.
The optimal selection criterion $S^*W(\Theta^*,q)$
is the minimizer
\be
S^*W(\Theta^*,q)=\arg\min\{\mbox{Cost}(SW(\Theta),q):\quad \Theta\in \mathcal{P},\quad S\in\mathcal{S}\}.
\ee
In this work we consider the sets \be
W(\Theta^{(l)})=\left\{\frac{\theta_j^{(l)}}{\theta_j^{(l)}{\bf
1}^{(l)}}:\,\, 0\leq j\leq l,\,\, \theta_j^{(l)}{\bf 1}^{(l)}\neq
0\right\}, \ee with $l=1, \hdots, n$. We recall that $I(l)=\{0\leq
j\leq l:\quad \theta_j^{(l)}{\bf 1}^{(l)}\neq 0\}$. Observe that for
each parametrization $\Theta^{(l)}$, the set $I(l)$ is not empty.

For simplicity we denote for each selection criterion $S$
\be
S(\Theta^{(l)}):=SW(\Theta^{(l)}),\quad l=1, \hdots, n.
\ee
Now, we are going to define our selection criteria.

\subsection{The selection criterion $S_u$}
Let $I(l)=\{j(0), \hdots, j(card(I(l))-1)\}$ be the elements of the set $I(l)$
with $j(0)< \hdots < j(card(I(l))-1)$. We define for each $u\leq n$ fixed the selection criterion
\be
S_u(\Theta^{(l)})=\frac{\theta_{j(\min(u,card(I(l))-1))}^{(l)}}{\theta_{j(\min(u,card(I(l))-1))}^{(l)}{\bf 1}^{(l)}}.
\ee
If $u=0$, then
\be
S_0(\Theta^{(l)})=\frac{\theta_{j(0)}^{(l)}}{\theta_{j(0)}^{(l)}{\bf 1}^{(l)}}.
\ee
If $u=n$, then
\be
S_n(\Theta^{(l)})=\frac{\theta_{j(card(I(l))-1)}^{(l)}}{\theta_{j(card(I(l))-1)}^{(l)}{\bf 1}^{(l)}}.
\ee
\subsection{The winning conservative rows $S_{u(q,\Theta)}(\Theta)$}
Given $\Theta$ and $q=1,2, +\infty$, the optimal selection criterion among $(S_u:\quad u=0, \hdots,n)$ is given by the minimizer
\be
u(q,\Theta)=\arg\min\{\mbox{Cost}(S_u(\Theta),q):\quad u=0, \hdots, n\}.
\ee
Hence $S_{u(q,\Theta)}(\Theta)$ is the optimal conservative rows among the set
$(S_u(\Theta):\quad u=0, \hdots,n)$ of conservative rows.

\subsection{The selection criterion $S_{\mbox{\small{mean}}}$}
For $1\leq l\leq n$, we consider the selection criterion
\be
S_{\mbox{\small{mean}}}(\Theta^{(l)})=\sum_{j\in I(l)}\frac{\theta_{j}^{(l)}}{card(I(l))\theta_j^{(l)}{\bf 1}^{(l)}}.
\ee
Observe that for the canonical parametrization $\Theta^{(l)}=(\delta_i^j: 0\leq i,j\leq l)$, we have
\be
S_{\small{\mbox{\small{mean}}}}(\Theta^{(l)})=(\frac{1}{l+1},\hdots,\frac{1}{l+1}).
\ee
\subsection{The winning conservative rows $S_{1,q}(\Theta)$}
Given $\Theta$ and $q=1,2, +\infty$, the optimal conservative rows among
$S_{u(q,\Theta)}(\Theta)$ and $S_{\mbox{\small{mean}}}(\Theta)$
is the minimizer
\be
S_{1,q}(\Theta):=\arg\min\{\mbox{Cost}(S_{u(q,\Theta)}(\Theta),q),\mbox{Cost}(
S_{\mbox{\small{mean}}}(\Theta),q)\}.
\ee
\subsection{The selection criterion $S_{u\mbox{\small{tail1}}}$}
For a fixed $u\leq n$ the set
\be
J_1(l,u)=\arg\max\{\sum_{i=\min(u,l)}^l\frac{\theta_{ji}^{(l)}}{\theta_j^{(l)}{\bf 1}^{(l)}}:\quad j\in I(l)\}
\ee
may be not a singleton. It
furnishes the selection criterion
\be
S_{\mbox{\small{tail1}}}(\Theta^{(l)})=\sum_{j\in J_1(l,u)}\frac{\theta_{j}^{(l)}}{card(J_1(l,u))\theta_{j}^{(l)}{\bf 1}^{(l)}}.
\ee
As a simple example, if $\Theta^{(l)}$ is the canonical parametrization and $u=n$, then
\be
J_1(l,n)=l.
\ee
If $u=n-1$, then $J_1(l,n-1)=l$ for $l\leq n-1$, and $J_1(n,n-1)=\{n-1,n\}$.
\subsection{The winning conservative rows $S_{u(q,\Theta)\mbox{\small{tail1}}}(\Theta)$}
The optimal selection criterion among $(S_{u\mbox{\small{tail1}}}:\quad u=0, \hdots,n)$ is the minimizer of
\be
S_{u(q,\Theta)\mbox{\small{tail1}}}=\arg\min\{\mbox{Cost}(\Theta,S_{u\mbox{\small{tail1}}}(\Theta),q):\quad u=0, \hdots, n\}.
\ee
\subsection{The winning conservative rows $S_{2,q}(\Theta)$}
The winner for each $q=1,2, +\infty$ fixed,
among $S_{u(q,\Theta)\mbox{\small{tail1}}}(\Theta)$ and $S_{1,q}(\Theta)$
is the minimizer
\be
S_{2,q}(\Theta):=\arg\min\{\mbox{Cost}(S_{u(q,\Theta)\mbox{\small{tail1}}}(\Theta),q),\mbox{Cost}(S_{1,q}(\Theta),q)\}.
\ee
\subsection{The selection criterion $S_{u\mbox{\small{tail2}}}$}
For a fixed $u\leq n$ the set
\be
J_2(l,u)=\arg\max\{\max\{\frac{\theta_{ji}^{(l)}}{\theta_j^{(l)}{\bf 1}^{(l)}}:\quad \min(u,l)\leq i\leq l\}:\quad j\in I(l)\},
\ee
may be not a singleton. It
furnishes the selection criterion
\be
S_{\mbox{\small{tail2}}}(\Theta^{(l)})=\sum_{j\in J_2(l,u)}\frac{\theta_{j}^{(l)}}{card(J_2(l,u))\theta_{j}^{(l)}{\bf 1}^{(l)}}.
\ee
\subsection{The winning conservative rows $S_{u(q,\Theta)\mbox{\small{tail2}}}(\Theta)$}
The optimal selection criterion among $(S_{u\mbox{\small{tail2}}}:\quad u=0, \hdots,n)$ is the minimizer of
\be
S_{u(q,\Theta)\mbox{\small{tail2}}}=\arg\min\{\mbox{Cost}(S_{u\mbox{\small{tail2}}}(\Theta),q):\quad u=0, \hdots, n\}.
\ee
\subsection{The winning conservative rows $S_{3,q}(\Theta)$}
The winner for each $q=1,2, +\infty$ fixed,
among $S_{u(q,\Theta)\mbox{\small{tail2}}}(\Theta)$ and $S_{2,q}(\Theta)$
is the minimizer
\be
S_{3,q}(\Theta):=\arg\min\{\mbox{Cost}(S_{u(q,\Theta)\mbox{\small{tail2}}}(\Theta),q),\mbox{Cost}(S_{2,q}(\Theta),q)\}.
\ee
\subsection{The selection criterion $S_{\small{maxcor}}$}
The set
\be
J_3(l)=\arg\max\{\frac{|\theta_j^{(l)}{\bf 1}^{(l)}|}{\sqrt{l+1}\|\theta_j^{(l)}\|}:\quad j\in I(l)\}
\ee
of the coordinates $\theta_j^{(l)}$ highly correlated with the constant trend ${\bf 1}^{(l)}$,
furnishes the selection criterion
\be
S_{\mbox{\small{maxcor}}}(\Theta^{(l)})=\sum_{j\in J_3(l)}\frac{\theta_{j}^{(l)}}{card(J_3(l))\theta_{j}^{(l)}{\bf 1}^{(l)}}.
\ee
\subsection{The winning conservative rows $S_{4,q}(\Theta)$}
The winner for each $q=1,2, +\infty$ fixed,
among $S_{\mbox{maxcor}}(\Theta)$ and $S_{3,q}(\Theta)$
is the minimizer
\be
S_{4,q}(\Theta):=\arg\min\{\mbox{Cost}(S_{\mbox{maxcor}}(\Theta),q),\mbox{Cost}(S_{3,q}(\Theta),q)\}.
\ee
\subsection{The selection criterion $S_{q\mbox{\small{nearU}}}$}
We consider the set
\be
J_4(l,q)=\arg\min\{\|\frac{\theta_j^{(l)}}{\theta_j^{(1)}{\bf 1}^{(l)}}-(\frac{1}{l+1}, \hdots,\frac{1}{l+1})\|_q:\quad j\in I(l)\}
\ee
of the nearest conservative rows $\frac{\theta_j^{(l)}}{\theta_j^{(1)}{\bf 1}^{(l)}}$ to the uniform
conservative row $(\frac{1}{l+1},\hdots, \frac{1}{l+1})$,
and the corresponding selection criterion
\be
S_{q\mbox{\small{nearU}}}(\Theta^{(l)})=\sum_{j\in J_4(l,q)}\frac{\theta_{j}^{(l)}}{card(J_4(l,q))\theta_{j}^{(l)}{\bf 1}^{(l)}}.
\ee
Here $\|\cdot\|_q$ denotes the $l(q)$-norm with $q=1,2,+\infty$.
\subsection{The winning conservative rows $S_{q(\Theta)\mbox{\small{nearU}}}(\Theta)$}
For each $q=1,2,+\infty$ fixed let us denote by
\be
q(\Theta)=\arg\min\{\mbox{Cost}(S_{q_1\mbox{\small{nearU}}}(\Theta),q):\quad q_1=1,2,+\infty\},
\ee
and then we obtain the winning conservative rows $S_{q(\Theta)\mbox{\small{nearU}}}(\Theta)$ among
the three conservative rows $(S_{q_1\mbox{\small{nearU}}}(\Theta):\quad q_1=1,2,+\infty)$.

\subsection{The winning conservative rows $S_{5,q}(\Theta)$}
For each $q=1,2, +\infty$ fixed, the winner
among $S_{q(\Theta)\mbox{\small{nearU}}}(\Theta)$ and $S_{4,q}(\Theta)$
is the minimizer
\be
S_{5,q}(\Theta):=\arg\min\{\mbox{Cost}(S_{q(\Theta)\mbox{\small{nearU}}}(\Theta),q),\mbox{Cost}(S_{4,q}(\Theta),q)\}.
\ee
\subsection{The selection criterion $S_{u\mbox{\small{var}}}$}
For each $j\in I(l)$ the variance of the data $\s^{(l)}$ w.r.t. to the conservative row
$\frac{\theta_j^{(l)}}{\theta_j^{(l)}{\bf 1}^{(l)}}$ is denoted by $var(l,j)$.
We define the one-to-one map $\sigma^{(l)}$
from $\{0,\hdots, card(I(l))-1\}$ to $I(l)$ as follows.
The integer $\sigma^{(l)}(0)$ is the first element of
\be
\arg\min\{\mbox{var}(l,j):\quad j\in I(l)\}.
\ee
By induction for $k<card(I(l))-1$ the integer $\sigma^{(l)}(k+1)$
is the first element of
\be
\arg\min\{\mbox{var}(l,j):\quad j\in I(l)\setminus\{\sigma^{(l)}(0), \hdots, \sigma^{(l)}(k)\}\}.
\ee
We define for a fixed $u\leq n$ the index
\be
j(l,u)=\sigma^{(l)}(\min(u,card(I(l))-1)),
\ee
and the selection criterion
\be
S_{u\mbox{\small{var}}}(\Theta^{(l)})=\frac{\theta_{j(l,u)}^{(l)}}{\theta_{j(l,u)}^{(l)}{\bf 1}^{(l)}}.
\ee
If $u=n$, then $j(l,n)=\sigma^{(l)}(card(I(l))-1)$ is the index of the largest variance.
If $u=0$, then $j(l,0)=\sigma^{(l)}(0)$ is the index of the smallest variance.

\subsection{The winning conservative rows $S_{u(q,\Theta)\mbox{\small{var}}}(\Theta)$}
The optimal selection criterion among $(S_{u\mbox{\small{var}}}:\quad u=0, \hdots,n)$ is the minimizer of
\be
S_{u(q,\Theta)\mbox{\small{var}}}=\arg\min\{\mbox{Cost}(\Theta,S_{u\mbox{\small{var}}}(\Theta),q):\quad u=0, \hdots, n\}.
\ee
\subsection{The winning  conservative sequence $S_{6,q}(\Theta)$}
The winner for each $q=1,2, +\infty$ fixed,
among $S_{u(q,\Theta)\mbox{\small{var}}}(\Theta)$ and $S_{5,q}(\Theta)$
is the minimizer
\be
S_{6,q}(\Theta):=\arg\min\{\mbox{Cost}(\Theta,S_{u(q,\Theta)\mbox{\small{var}}}(\Theta),q),\mbox{Cost}(\Theta,S_{5,q}(\Theta),q)\}.
\ee
\subsection{The selection criterion $S_{uv\small{fd}}$}
We define for each fixed $\tilde{l}\leq l$
the permutation $\sigma^{(l,\tilde{l})}$ of the set $I(l)$ as follows.
The integer $\sigma^{(l,\tilde{l})}(0)$ is the first element of
\be
\arg\min\{|\frac{\theta_j^{(l)}\s^{(l)}}{\theta_j^{(l)}{\bf 1}^{(l)}}-s(\tilde{l})|:\quad j\in I(l)\}.
\ee
By induction for $k< card(I(l))-1$, $\sigma^{(l,\tilde{l})}(k+1)$ is the first element of
\be
\arg\min\{|\frac{\theta_j^{(l)}\s^{(l)}}{\theta_j^{(l)}{\bf 1}^{(l)}}-s(\tilde{l})|:\quad j\in I(l)\setminus\{\sigma^{(l,\tilde{l})}(0), \hdots, \sigma^{(l,\tilde{l})}(k)\}\}.
\ee
Let $0\leq u\leq n$, $0\leq v\leq n$ fixed. The selection criterion
\be
j(l,u,v)=\sigma^{(l,\min(l,u))}(\min(v,card(I(l))-1))
\ee
furnishes the selection criterion
\be
S_{uvfd}(\Theta)=\frac{\theta_{j(l,u,v)}^{(l)}}{\theta_{j(l,u,v)}^{(l)}{\bf 1}^{(l)}}.
\ee
If $u=v=n$, then $j(l,n,n)=\sigma^{(l,l)}(card(I(l))-1)$ is the index of the farest element
$\frac{\theta_j^{(l)}\s^{(l)}}{\theta_j^{(l)}{\bf 1}^{(l)}}$ from $s(l)$.
If $u=0, v=n$, then $j(l,0,n)=\sigma^{(l,0)}(card(I(l))-1)$ is the index of the farest
element $\frac{\theta_j^{(l)}\s^{(l)}}{\theta_j^{(l)}{\bf 1}^{(l)}}$
from $s(0)$.
If $u=v=0$, then $j(l,0,0)=\sigma^{(l,0)}(0)$ is the index of the nearest
element $\frac{\theta_j^{(l)}\s^{(l)}}{\theta_j^{(l)}{\bf 1}^{(l)}}$ from $s(0)$.

\subsection{The winning conservative rows $S_{u(q,\Theta)v(q,\Theta)\small{fd}}(\Theta)$}
Given $\Theta$ and $q=1,2,+\infty$, the optimal conservatrice sequence $S_{u(q,\Theta)v(q,\Theta)\small{fd}}(\Theta)$ among $(S_{uv\small{fd}}(\Theta):\quad u,v=0, \hdots,n)$ is
given by the minimizer
\be
(u(q,\Theta),v(q,\Theta))=\arg\min\{\mbox{Cost}(S_{uv\small{fd}}(\Theta),q):\quad u,v=0, \hdots, n\}.
\ee

\subsection{The winning conservative rows $S_{7,q}(\Theta)$}
Given $\Theta$ and $q=1,2,+\infty$, the optimal conservative rows
$S_{7,q}(\Theta)$
among $S_{u(q,\Theta)v(q,\Theta)\mbox{fd}}(\Theta)$ and $S_{6,q}(\Theta)$
is the minimizer
\be
\arg\min\{\mbox{Cost}(S_{u(q,\Theta)v(q,\Theta)\mbox{fd}}(\Theta),q,L),\mbox{Cost}(S_{6,q}(\Theta),q)\}.
\ee

\section{The winning conservative rows $S_{7}(\Theta(q))$}
We constructed for $q=1,2,+\infty$ fixed and each parametrization $\Theta$ the optimal conservative rows
$S_{7,q}(\Theta)$.
Assume that we have a finite set $\mathcal{P}$ of parametrizations. The minimizer $\Theta^*(q)$
of the map
\be
\Theta\to \mbox{Cost}(S_{7,q}(\Theta),q)
\ee
furnishes the optimal selection criterion $S_7(\Theta(q)):=S_{7,q}(\Theta(q))$.

\section{Application to parametrizations given by the energy of the spline}
We identify for the integer $l\geq 1$ the space $\Rb^{l+1}$ with the space of the natural cubic splines
$S_{3,nat}(0,\ldots, l)$ having the knots $0, \ldots, l$.
Let us denote $S_{3}(0,\ldots, l)$ the set of cubic splines
having the knots $0, \ldots, l$.
We recall that an element $s\in S_3$ is a $C^2$ map
on $[0,l]$ and
is a polynomial of degree three on each interval $[i,i+1)$ for
$i=0$,\ldots, $l-1$.
More precisely, let
\be
p_0 =s(0), \ldots, p_l=s(l),\quad q_0 =s'(0), \ldots, q_{l} =s'(l),\\
u_0 =s''(0), \ldots, u_{l} =s''(l),\quad v_0 = s'''(0+), \ldots, v_{l-1}=s'''(l-1+)
\ee
be respectively the values of $s$ and its derivatives up to order three on the knots.
We have for $i=0,\ldots, l-1$,
\be
s(t)=p_i+q_i(t-i)+(t-i)^2u_i/2+(t-i)^3v_i/6,\quad t\in [i,i+1).
\ee
The following constraint guarantees the hypothesis that $s$ is $C^2$:
\ben
p_i+q_i+u_i/2+v_i/6=p_{i+1},\quad \label{c0}\\
q_i+u_i+v_i/2=q_{i+1},\quad \label{c1}\\
v_i=s^{(3)}(t_i)=u_{i+1}-u_i.\quad \label{c2}
\een
It is well known \cite{deBoor} that $S_3(0,\ldots, l)$ has the dimension $l+3$, see also \cite{Craven} and \cite{Wahba1990}.
Hence an element $s\in S_3(0,\ldots, l)$ is completely defined by
$l+3$ independent parameters. Moreover, the set of natural cubic splines $S_{3,nat}(0,\ldots,l)$
is the set of cubic spline $s$ with $s''(0)=s''(l)=0$. Hence the dimension of  $S_{3,nat}(0,\ldots, l)$ is equal to $l+1$.
Now we are ready to define our parametrizations of $\Rb^{l+1}$.

There exist for each $l$ fixed a unique non symmetric matrix $\M^{(l)}$ and a unique symmetric matrix $\S^{(l)}$
such that
\be
\int_0^l|s(t)|^2dt=\{\s^{(l)}\}^\top\M^{(l)}\s^{(l)}=\{\s^{(l)}\}^\top\S^{(l)}\s^{(l)},
\ee
for all $s\in S_{3,nat}(0,\ldots, l)$.

We consider the following six parametrization matrices $\Theta^{(l)}=\M^{(l)}$, $\{\M^{(l)}\}^\top$, $\{\M^{(l)}\}^{-1}$, $\{\{\M^{(l)}\}^{-1}\}^\top$,
$\S^{(l)}$, $\{\S^{(l)}\}^{-1}$.
\begin{figure}[H]
    \begin{center}
        \begin{minipage}[t]{3cm}
            {\includegraphics[ width=3.6cm]{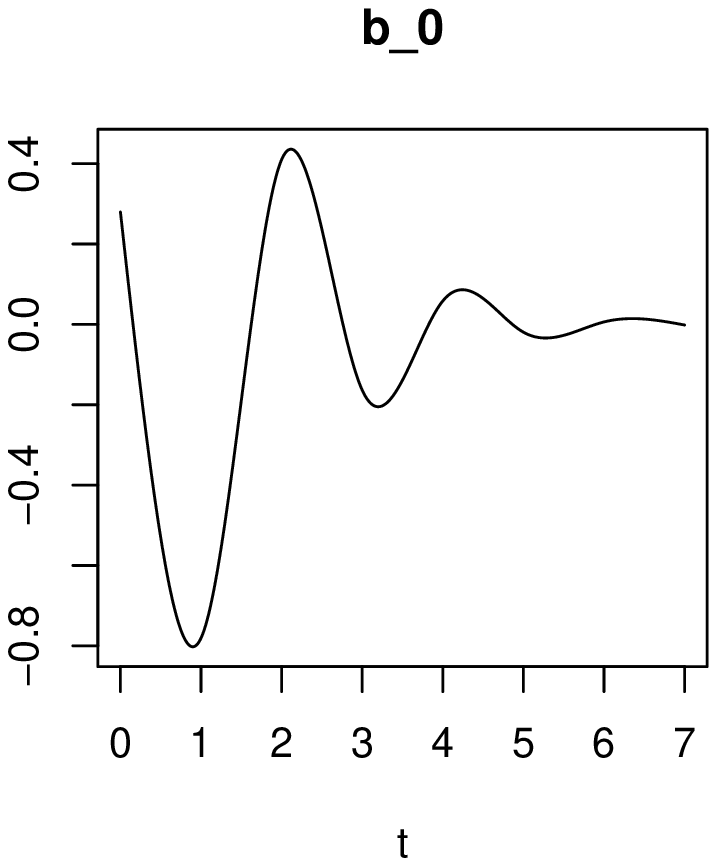}}
        \end{minipage}
        \begin{minipage}[t]{3cm}
            {\includegraphics[ width=3.6cm]{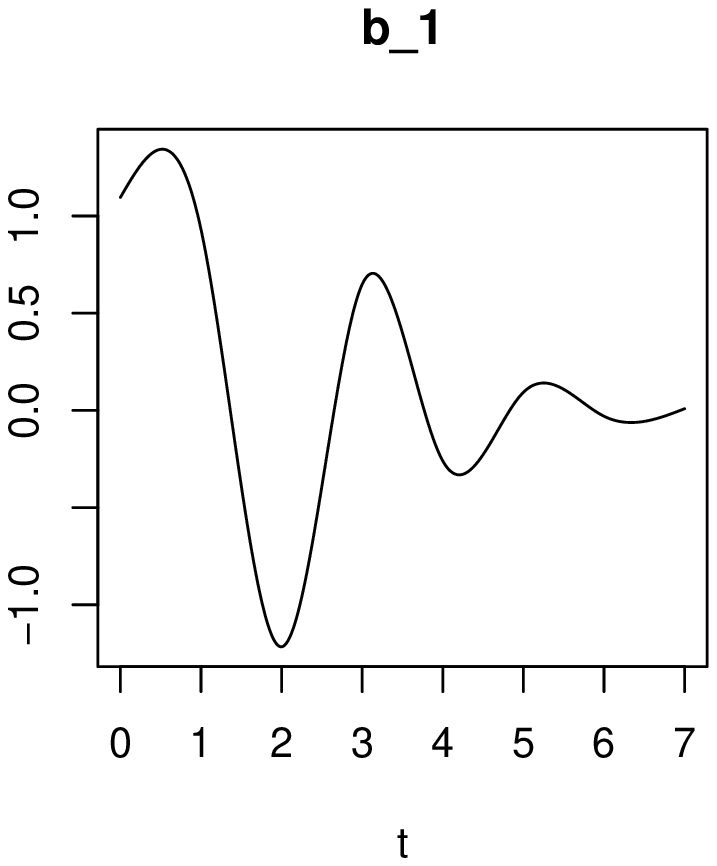}}
        \end{minipage}
        \begin{minipage}[t]{3cm}
            {\includegraphics[ width=3.6cm]{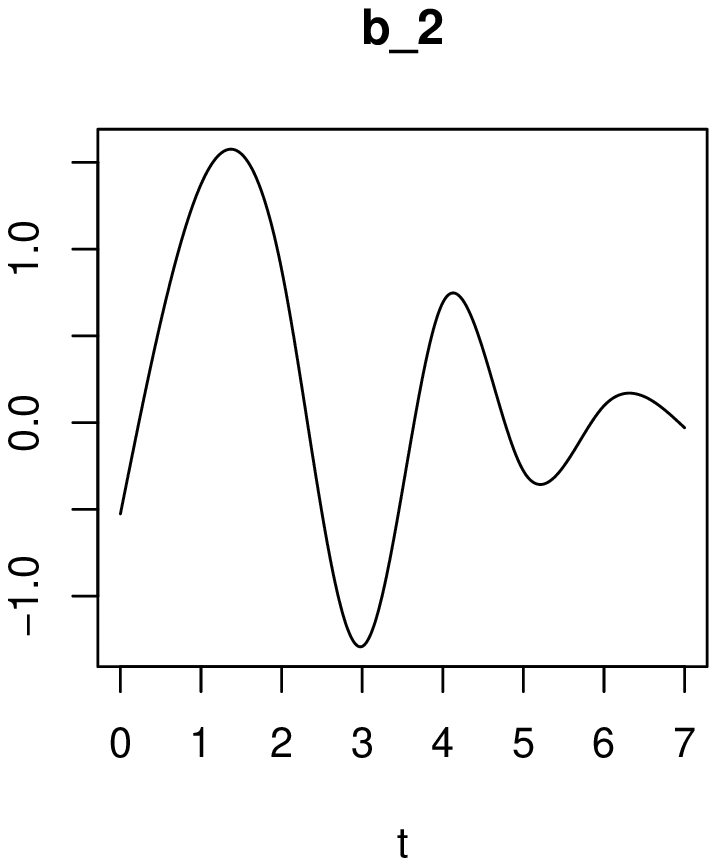}}
        \end{minipage}
        \begin{minipage}[t]{3cm}
            {\includegraphics[ width=3.6cm]{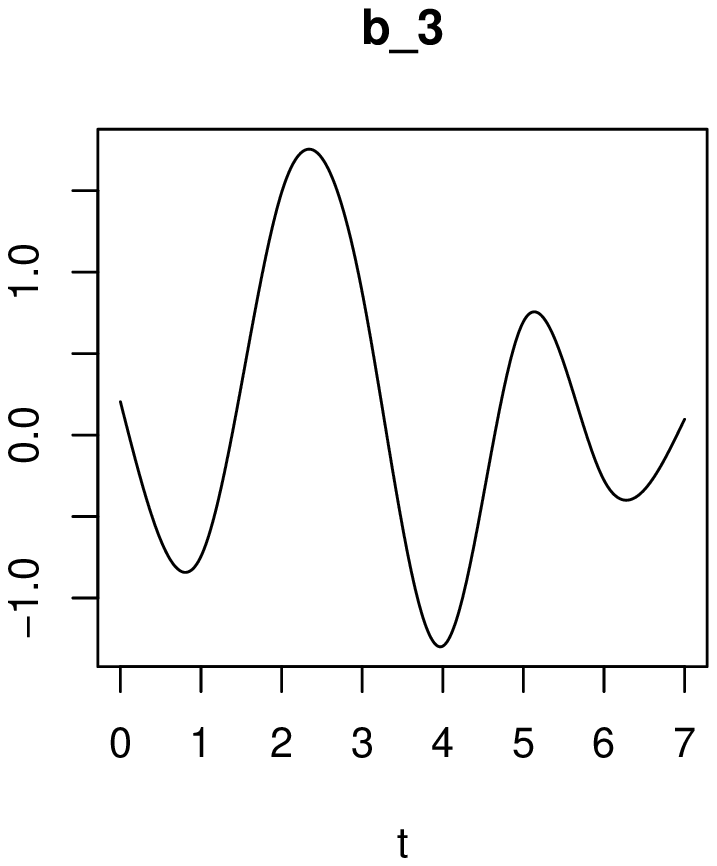}}
        \end{minipage}
        \begin{minipage}[t]{3cm}
            {\includegraphics[ width=3.6cm]{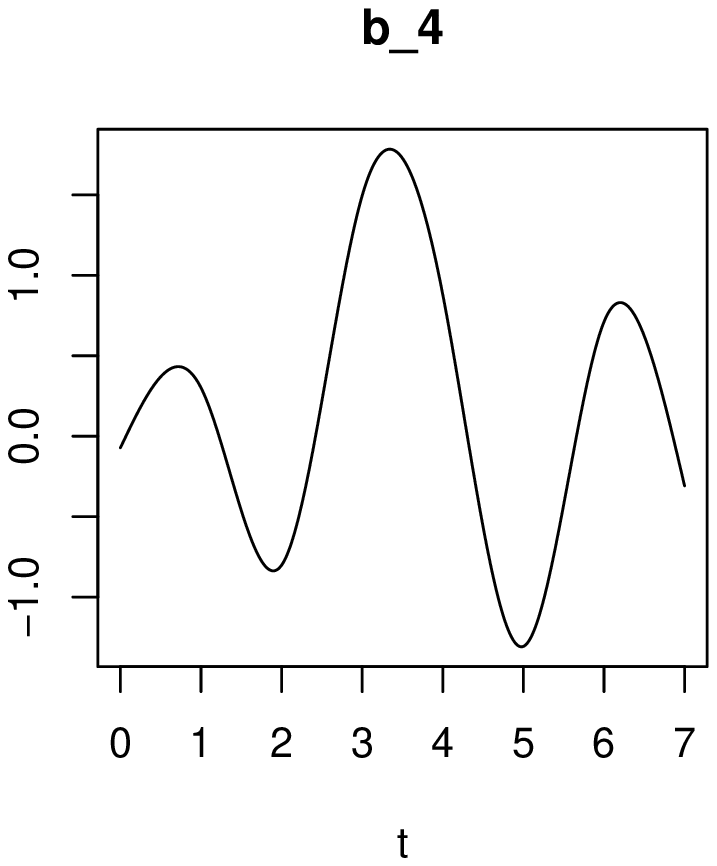}}
        \end{minipage}
        \begin{minipage}[t]{3cm}
            {\includegraphics[ width=3.6cm]{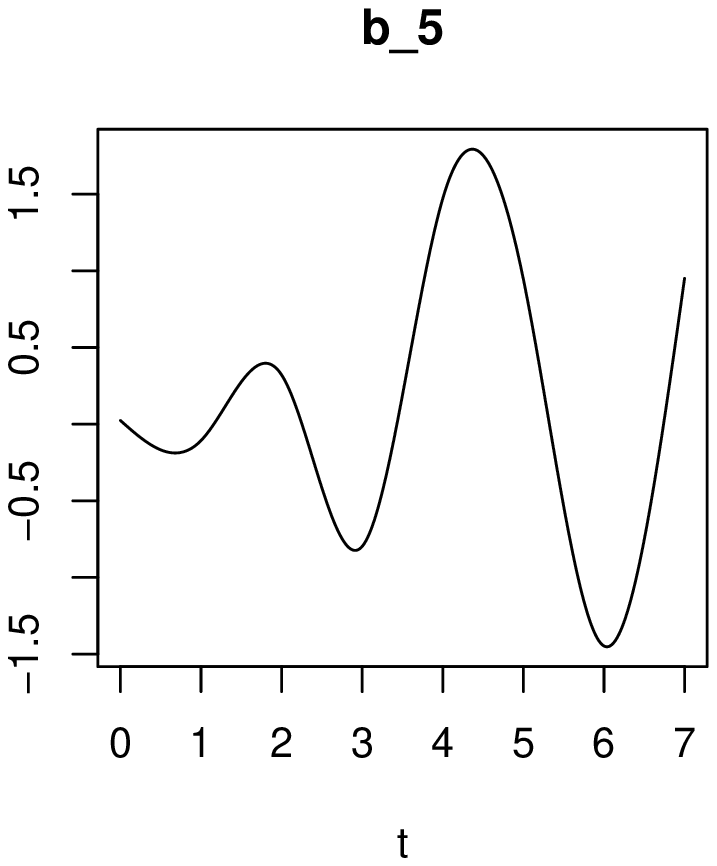}}
        \end{minipage}
        \begin{minipage}[t]{3cm}
            {\includegraphics[ width=3.6cm]{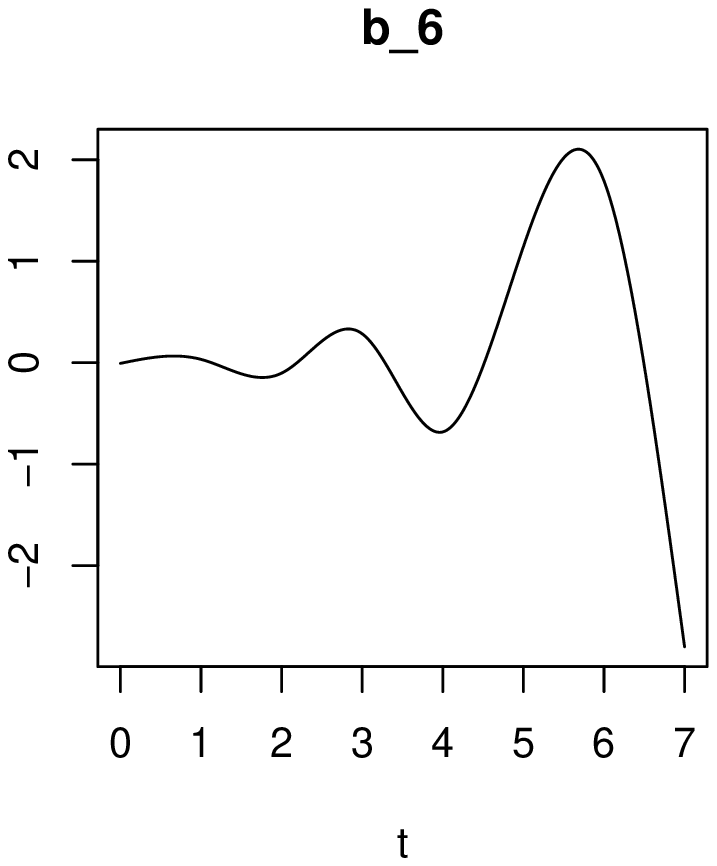}}
        \end{minipage}
        \begin{minipage}[t]{3cm}
            {\includegraphics[ width=3.6cm]{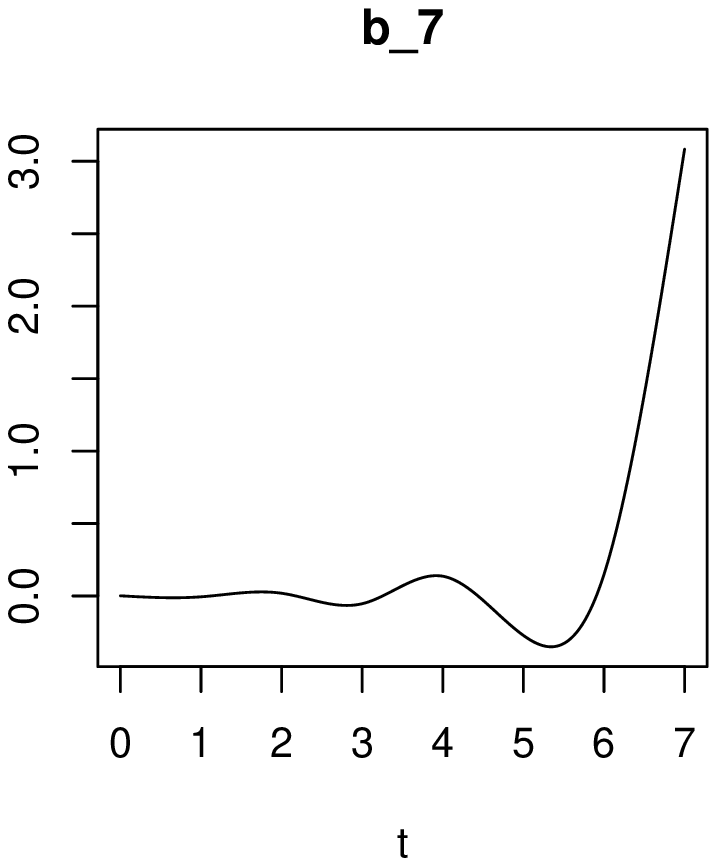}}
        \end{minipage}  \\
        \caption{ Represantation of the basis $\B^{(l)}=\M^{(l)}$ with $l=7$.}
    \end{center}
\end{figure}

\begin{figure}[H]
    \begin{center}
        \begin{minipage}[t]{3cm}
            {\includegraphics[ width=3.6cm]{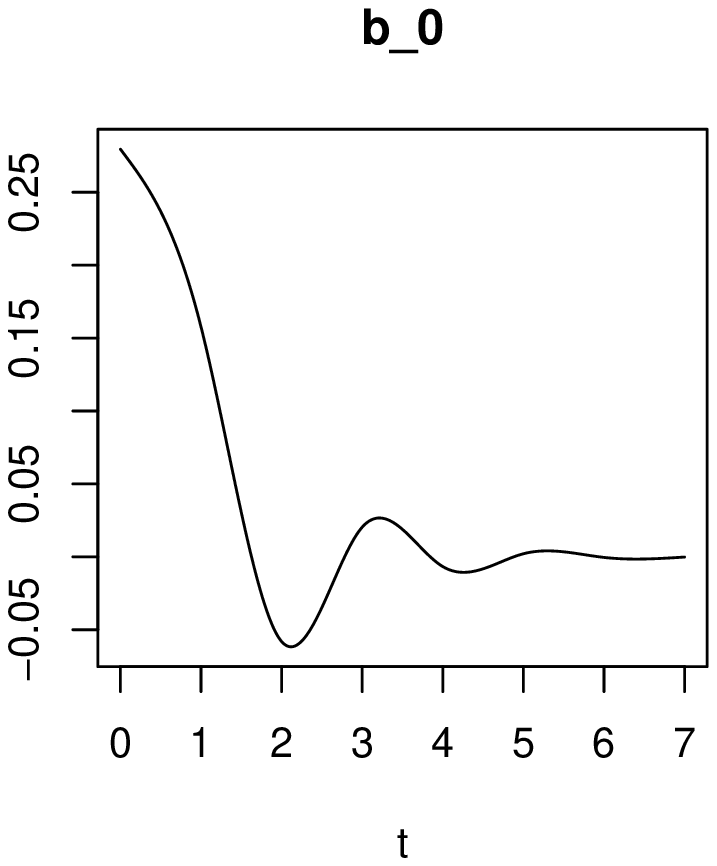}}
        \end{minipage}
        \begin{minipage}[t]{3cm}
            {\includegraphics[ width=3.6cm]{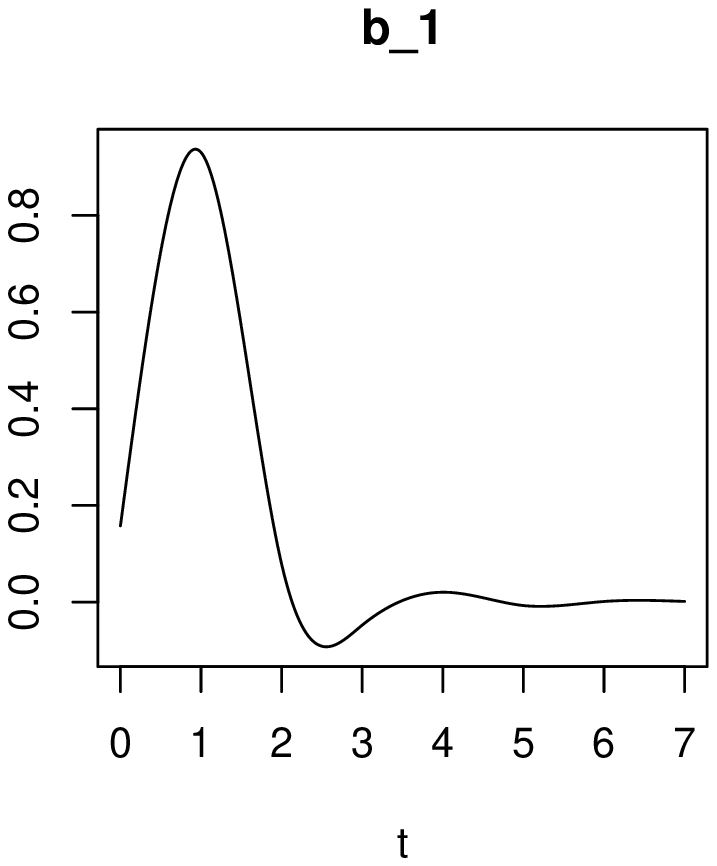}}
        \end{minipage}
        \begin{minipage}[t]{3cm}
            {\includegraphics[ width=3.6cm]{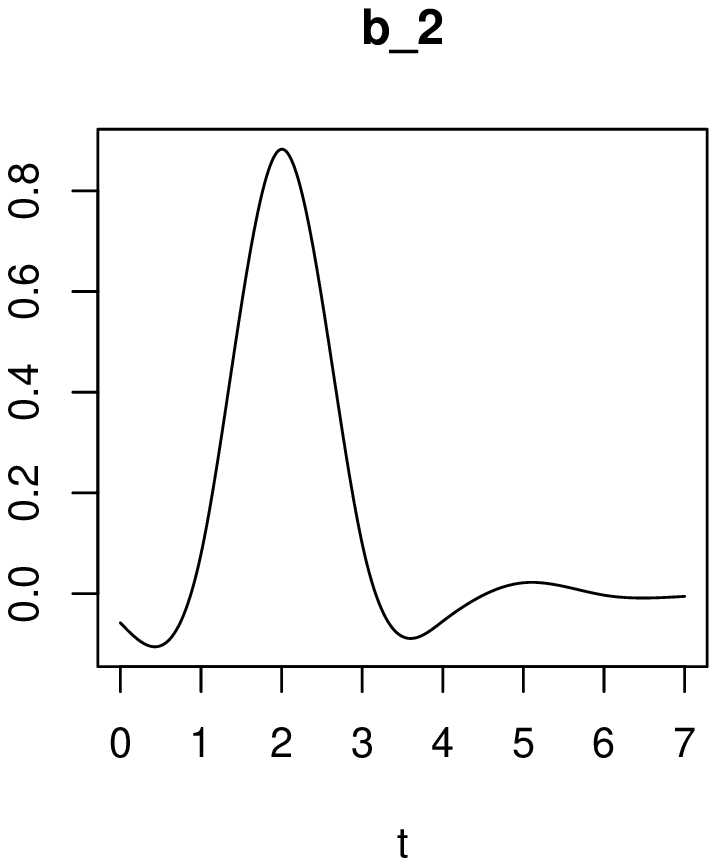}}
        \end{minipage}
        \begin{minipage}[t]{3cm}
            {\includegraphics[ width=3.6cm]{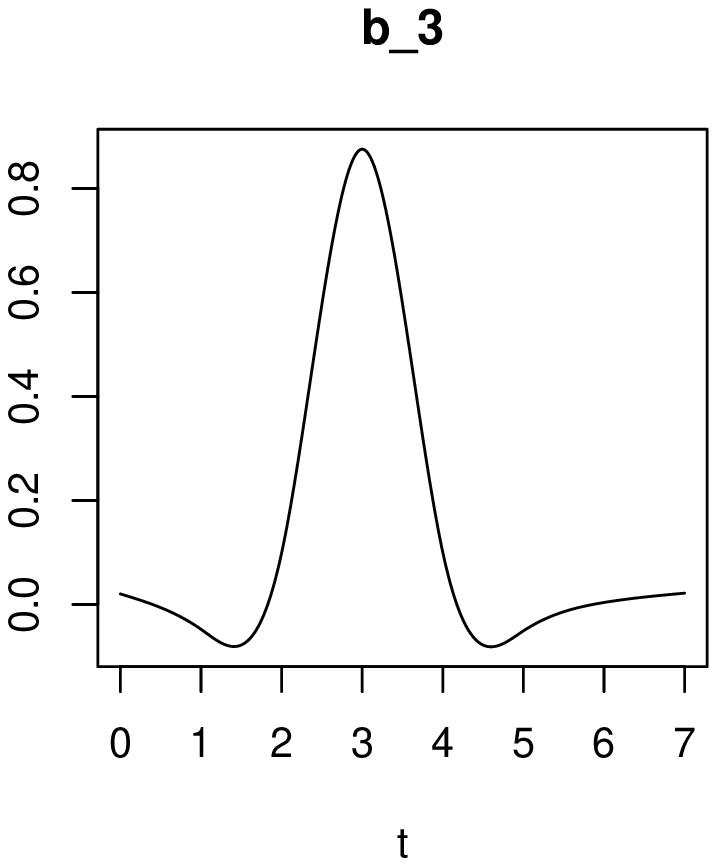}}
        \end{minipage}
        \begin{minipage}[t]{3cm}
            {\includegraphics[ width=3.6cm]{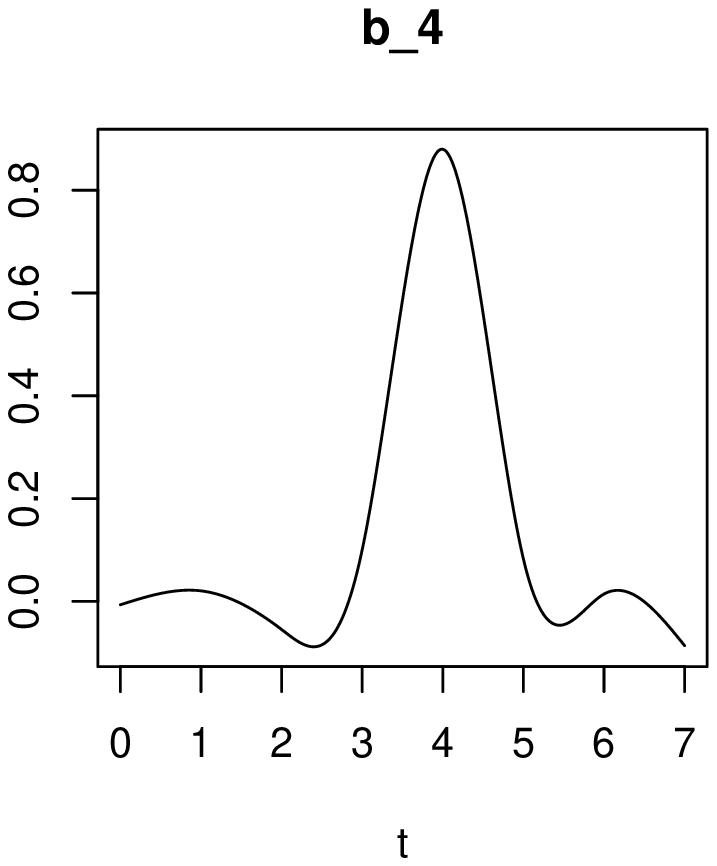}}
        \end{minipage}
        \begin{minipage}[t]{3cm}
            {\includegraphics[ width=3.6cm]{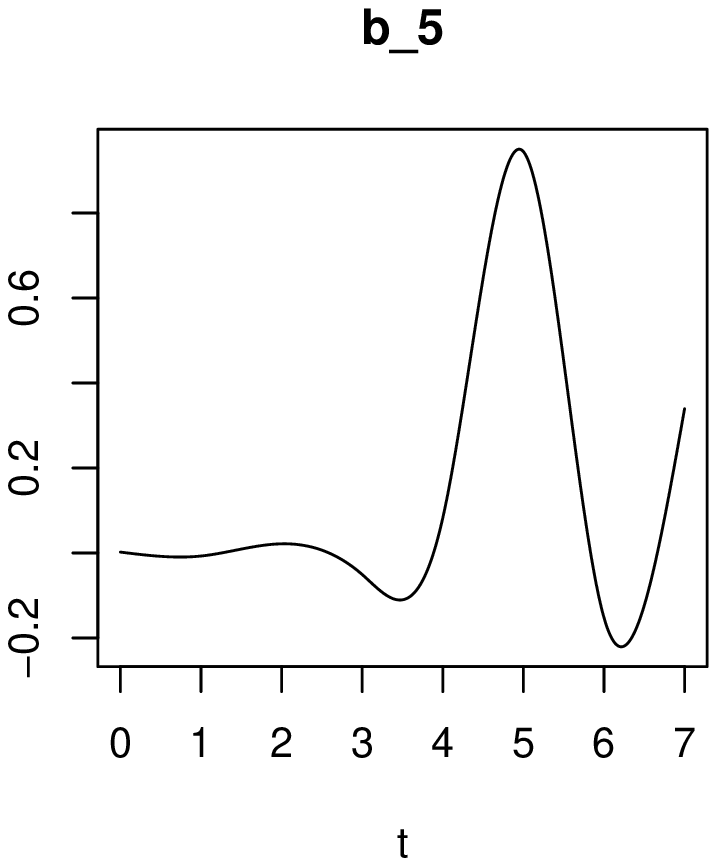}}
        \end{minipage}
        \begin{minipage}[t]{3cm}
            {\includegraphics[ width=3.6cm]{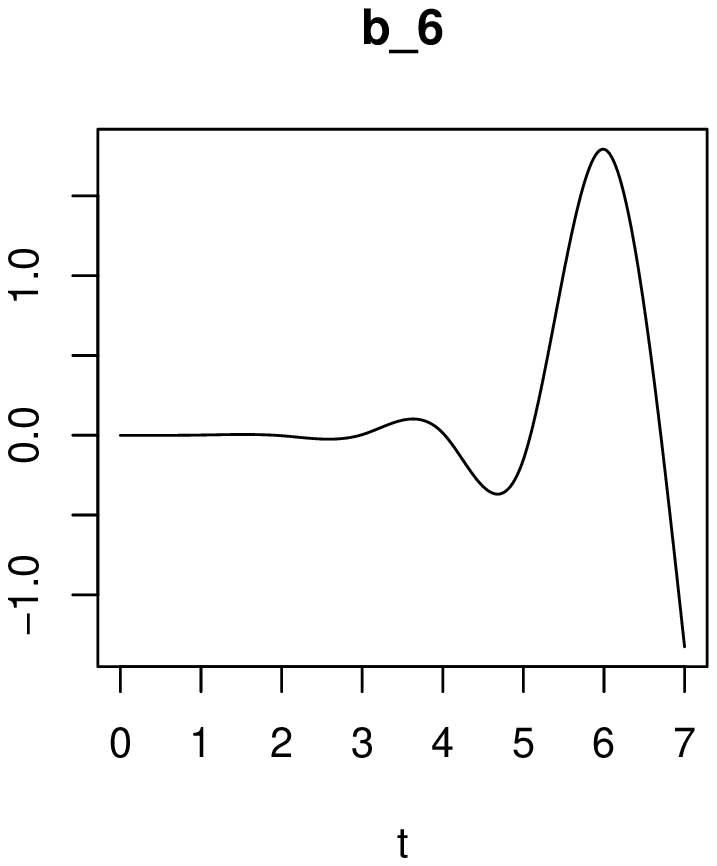}}
        \end{minipage}
        \begin{minipage}[t]{3cm}
            {\includegraphics[ width=3.6cm]{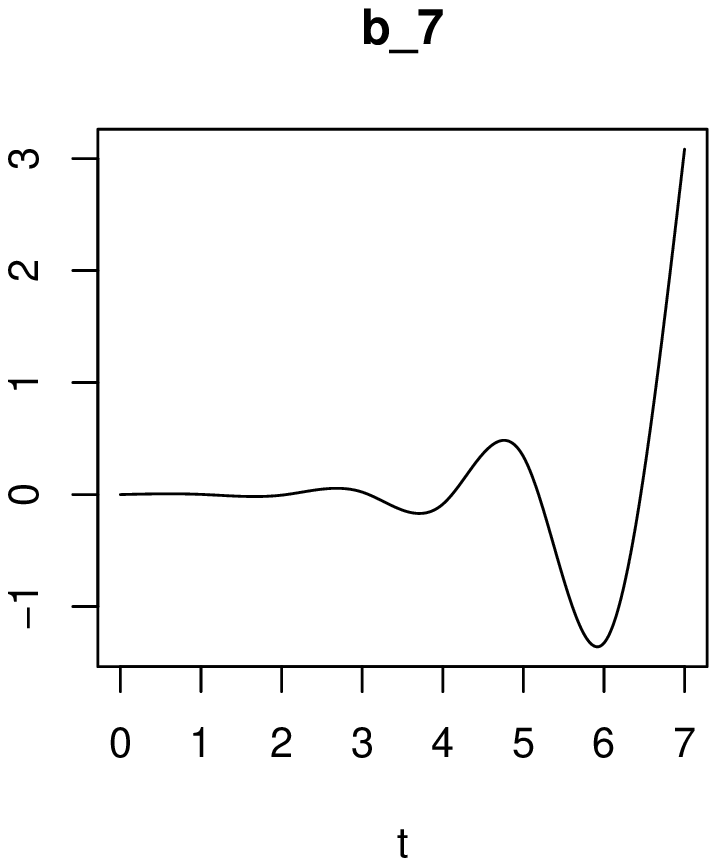}}
        \end{minipage}  \\
        \caption{Representation of the basis $\B^{(l)}=\S^{(l)}$ with $l=7$.}
    \end{center}
\end{figure}

\subsection{Real data application}
In the temperature prediction problem we are interested in the annual mean temperature observed in France and Morocco from 1901 to 2015. Data $\s^{(n)}=(s(0), \ldots, s(n))^\top$ with $n=114$ respectively for France and Morocco
are presented in Figure(\ref{data s}). Observe that $s(n)$ denotes the temperature of the year $n+1$.
\begin{figure}[H]
    \caption{Annual mean temperatures in France and Morocco from 1901 to 2015}
    \label{data s}
    \begin{center}
        \begin{minipage}[t]{10cm}
            {\includegraphics[ width=10cm]{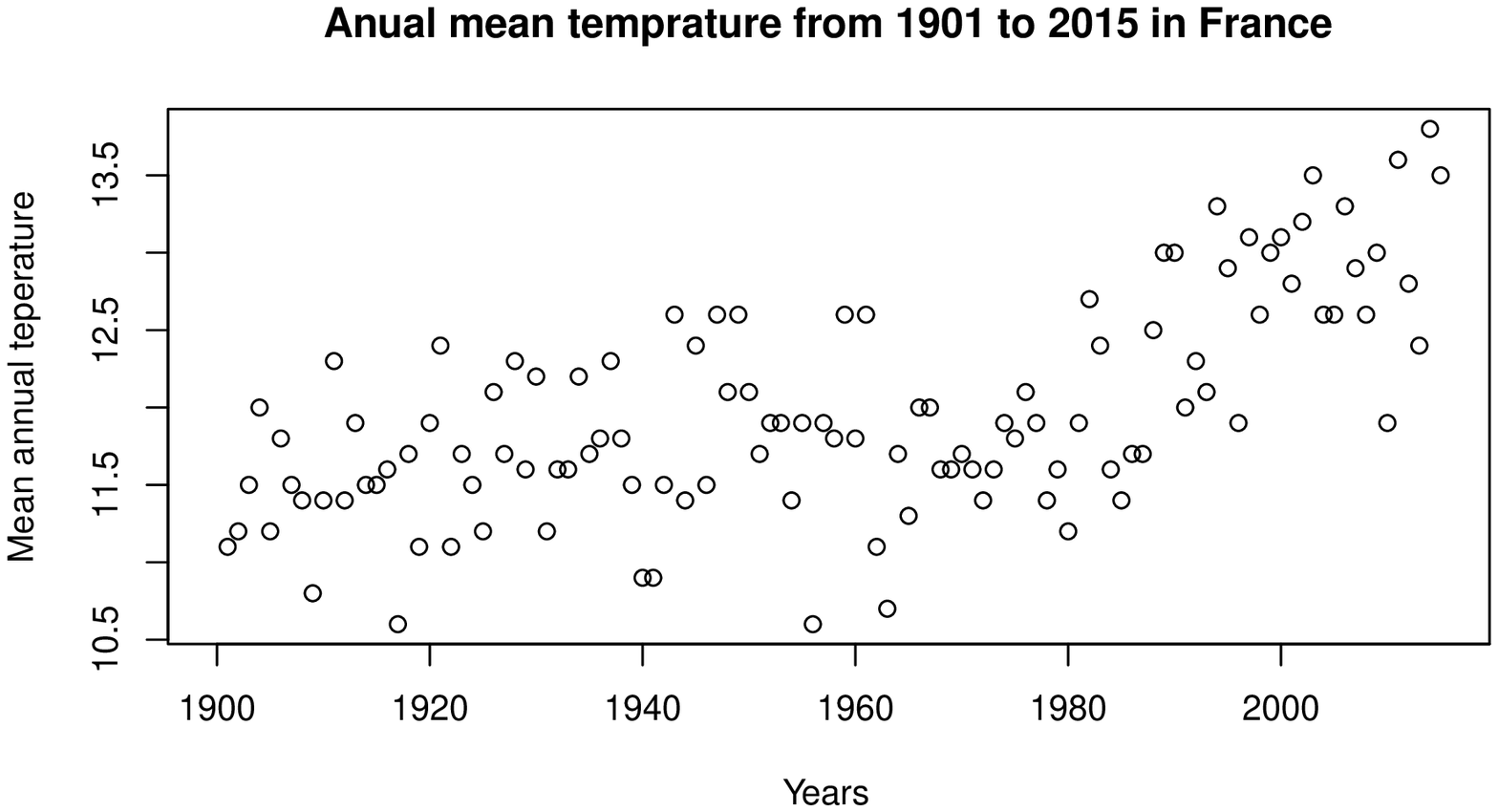}}
        \end{minipage}
        \begin{minipage}[t]{10cm}
            {\includegraphics[ width=10cm]{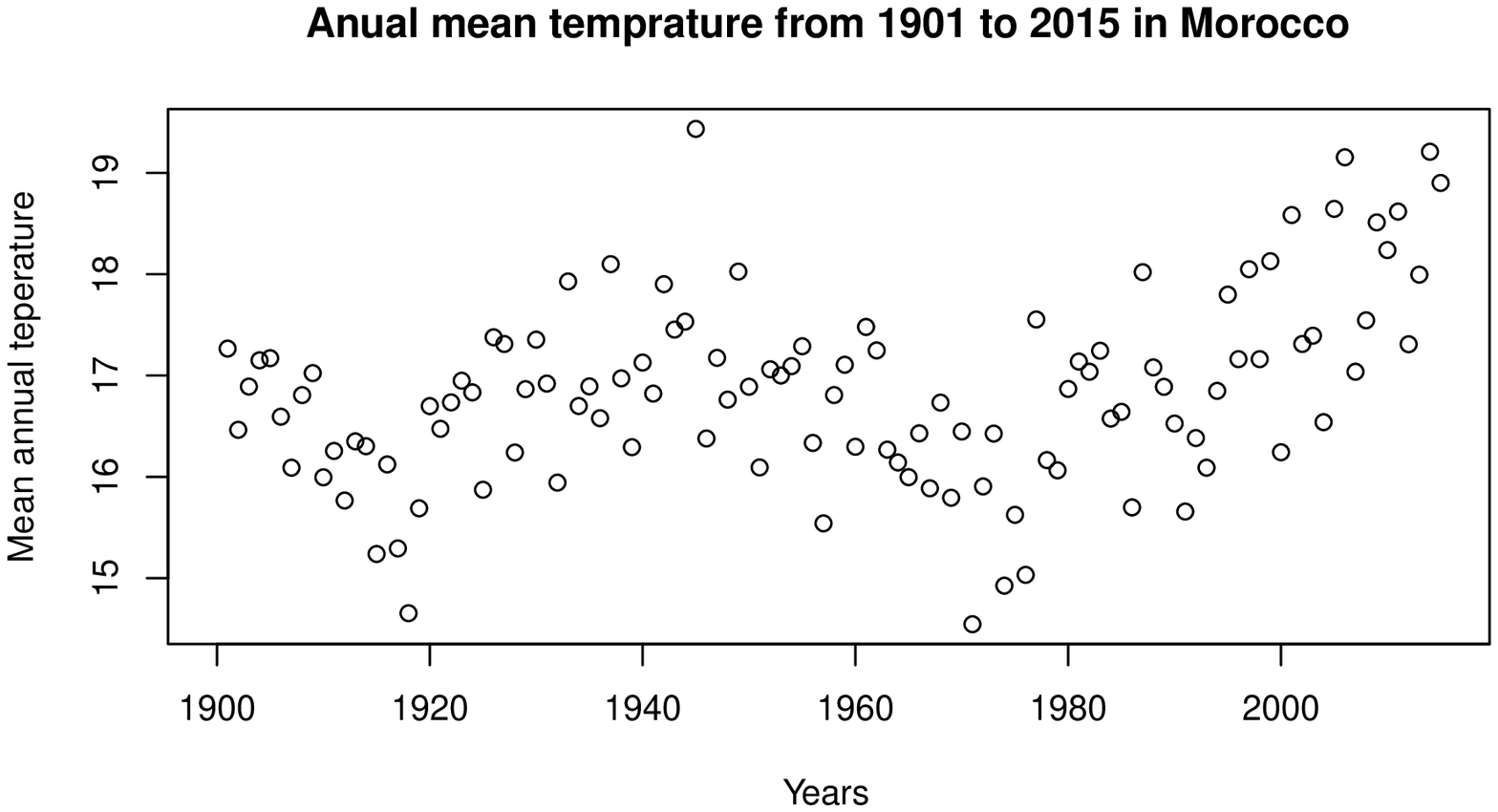}}
        \end{minipage}   \\

    \end{center}
\end{figure}
\subsubsection{Predictors}
\label{pre}
Our set of parametrizations $\mathcal{P}$ contains
\be
&&\M=(\M^{(l)}:\quad l=1, \hdots, n),\\
&&\M^\top=(\{\M^{(l)}\}^\top:\quad l=1, \hdots, n),\\
&&\M^{-1}=(\{\M^{(l)}\}^{-1}:\quad l=1, \hdots, n),\\
&&\{\M^{-1}\}^\top=(\{\{\M^{(l)}\}^{-1}\}^{\top}:\quad l=1, \hdots, n),\\
&&\S=(\S^{(l)}:\quad l=1, \hdots, n),\\
&&\S^{-1}=(\{\S^{(l)}\}^{-1}:\quad l=1, \hdots, n).
\ee
\\
Table (\ref{p-table}) shows  that for each $q=1,2,+\infty$ and the lag $L=4$ the optimal parametrization $\Theta(q)=\M^{-1}$ for both France and Morocco, but the optimal conservative rows $S_7(\Theta(q))$ do not coincide.
The optimal conservative rows $S_7(\Theta^{(114)}(q))$ are plotted in Figures (\ref{w2}) and (\ref{w1}).
The predictors of the temperature $s(114)$ (the temperature at the year 2015) and the true temperature is given in Table (\ref{pr-table}). The predictors of the temperature $s(115)$ (the temperature at the year 2016) is given in Table (\ref{pr-table2}). Splines of the true temperature and its optimal predictors are represented in Figure (\ref{s1}).
\begin{table}[H]
    \centering
    \caption{The optimal choice $\Theta(q)$ and $S_7(\Theta(q))$. }
    \label{p-table}
    \begin{tabular}{|l|l|l|l|}
        \cline{1-4}
        Country & \multicolumn{3}{c|}{\textbf{France}}\\ \cline{1-4}
        $q$ & 1 & 2 & $\infty$\\ \cline{1-4}
        $\Theta(q)$ & $\M^{-1}$ & $\M^{-1}$ & $\M^{-1}$ \\ \cline{1-4}
        $S_7(\Theta(q))$ & $S_{u\mbox{\small{tail2}}}$ $(u=86)$ & $S_{uv\small{fd}}$ $(u=93,v=5)$ & $S_{uv\small{fd}}$ $(u=81,v=8)$ \\ \cline{1-4}
        cost & 0.4233063 & 0.2784530 & 1.220770\\ \cline{1-4}
        Country & \multicolumn{3}{c|}{\textbf{Morocco}}\\ \cline{1-4}
        $q$ & 1 & 2 & $\infty$\\ \cline{1-4}
        $\Theta(q)$ & $\M^{-1}$ & $\M^{-1}$ & $\M^{-1}$ \\ \cline{1-4}
        $S_7(\Theta(q))$ & \multicolumn{2}{c|}{$S_{u\mbox{\small{tail1}}}$ $(u=73)$} & $S_{u\mbox{\small{tail2}}}$ $(u=41)$ \\ \cline{1-4}
        cost & 0.6183125 & 0.6027288 & 1.917094\\ \cline{1-4}
    \end{tabular}
\end{table}

\begin{table}[H]
    \centering
    \caption{The best prediction of  $s(114)$ using the optimal parametrization $\Theta(q)=\M^{-1}$ and
    the optimal conservative rows $S_7(\Theta(q))$.}
    \label{pr-table}
    \begin{tabular}{|l|l|l|l|}
        \cline{1-4}
        Country & \multicolumn{3}{c|}{\textbf{France}}\\ \cline{1-4}
        $q$ & 1 & 2 & $\infty$\\ \cline{1-4}
        True temperature &\multicolumn{3}{c|}{13.8} \\ \cline{1-4}
        Prediction & 13.03396 & 13.01986 & 12.86248 \\ \cline{1-4}
        Country & \multicolumn{3}{c|}{\textbf{Morocco}}\\ \cline{1-4}
        $q$ & 1 & 2 & $\infty$\\ \cline{1-4}
        True temperature &\multicolumn{3}{c|}{19.20845} \\ \cline{1-4}
        Prediction & 18.17489 & 18.17489 & 18.06860 \\ \cline{1-4}

    \end{tabular}
\end{table}

\begin{table}[H]
    \centering
    \caption{The best prediction of $s(115)$ (the temperature at the year 2016) using the optimal parametrization $\Theta(q)=\M^{-1}$ and the optimal conservative rows $S_7(\Theta(q))$.}
    \label{pr-table2}
    \begin{tabular}{|l|l|l|l|}
        \cline{1-4}
        Country & \multicolumn{3}{c|}{\textbf{France}}\\ \cline{1-4}
        $q$ & 1 & 2 & $\infty$\\ \cline{1-4}
        Prediction & 12.66792 & 13.06234 & 12.82844 \\ \cline{1-4}
        Country & \multicolumn{3}{c|}{\textbf{Morocco}}\\ \cline{1-4}
        $q$ & 1 & 2 & $\infty$\\ \cline{1-4}
        Prediction & 17.53958 & 17.53958  &17.76148 \\ \cline{1-4}
    \end{tabular}
\end{table}

\begin{figure}[H]
    \caption{The optimal conservative row $S_7(\Theta^{(114)}(q))$, $q=1,2,\infty$ for Morocco.}
    \label{w2}
    \begin{center}
        {\includegraphics[ width=12cm]{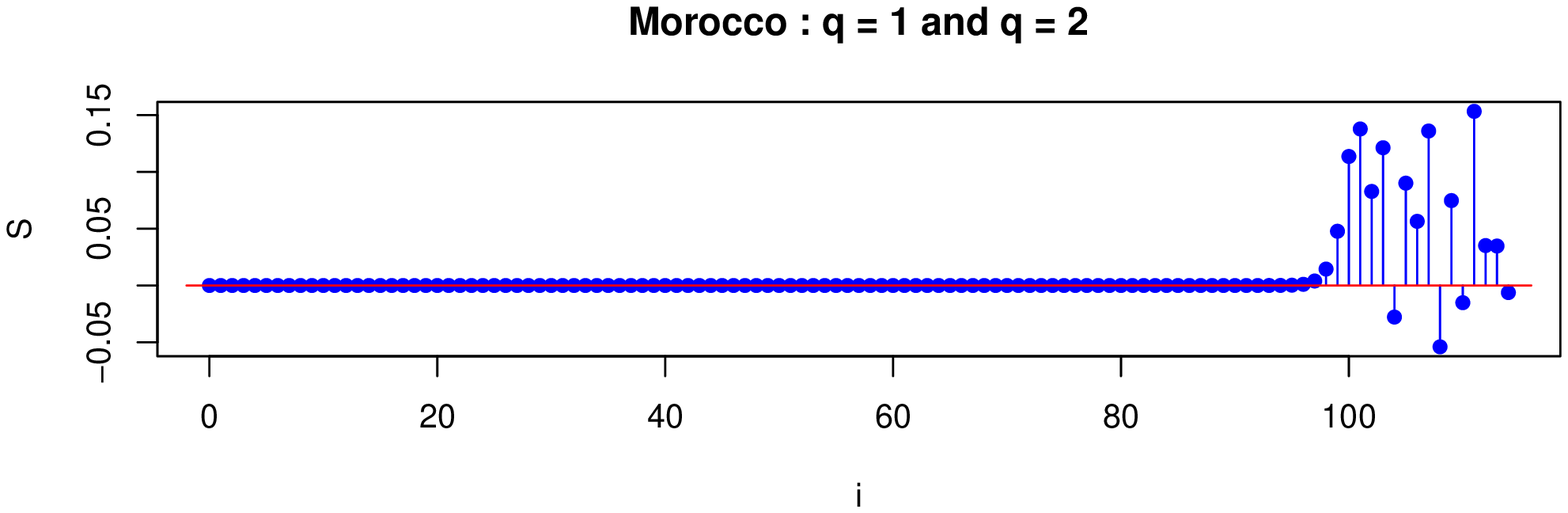}}\\
        {\includegraphics[ width=12cm]{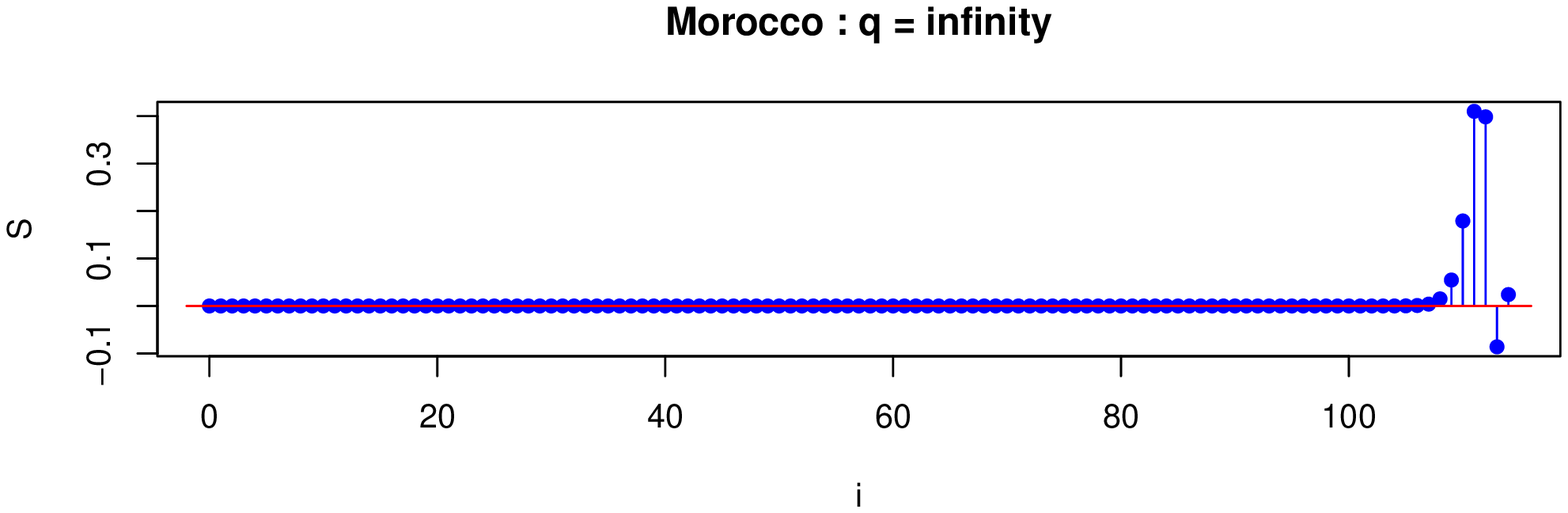}}\\

    \end{center}
\end{figure}

\begin{figure}[H]
    \caption{The optimal conservative row $S_7(\Theta^{(114)}(q))$, $q=1,2,\infty$ for France. }
    \label{w1}
    \begin{center}
            {\includegraphics[ width=12cm]{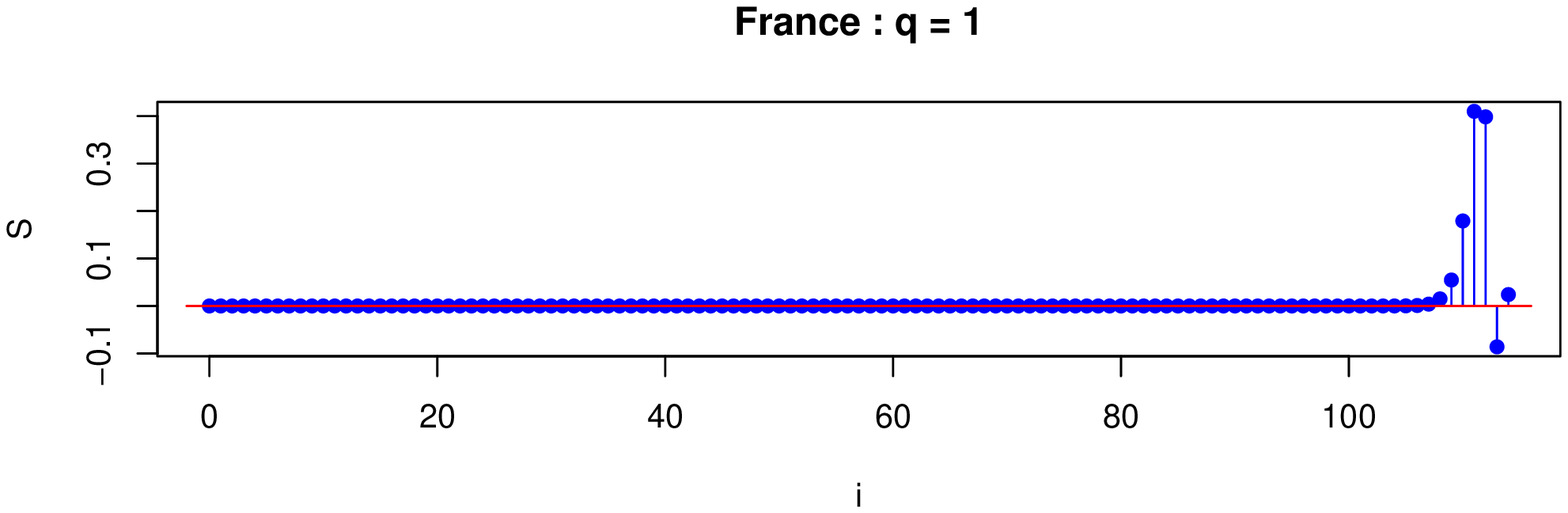}}\\
            {\includegraphics[ width=12cm]{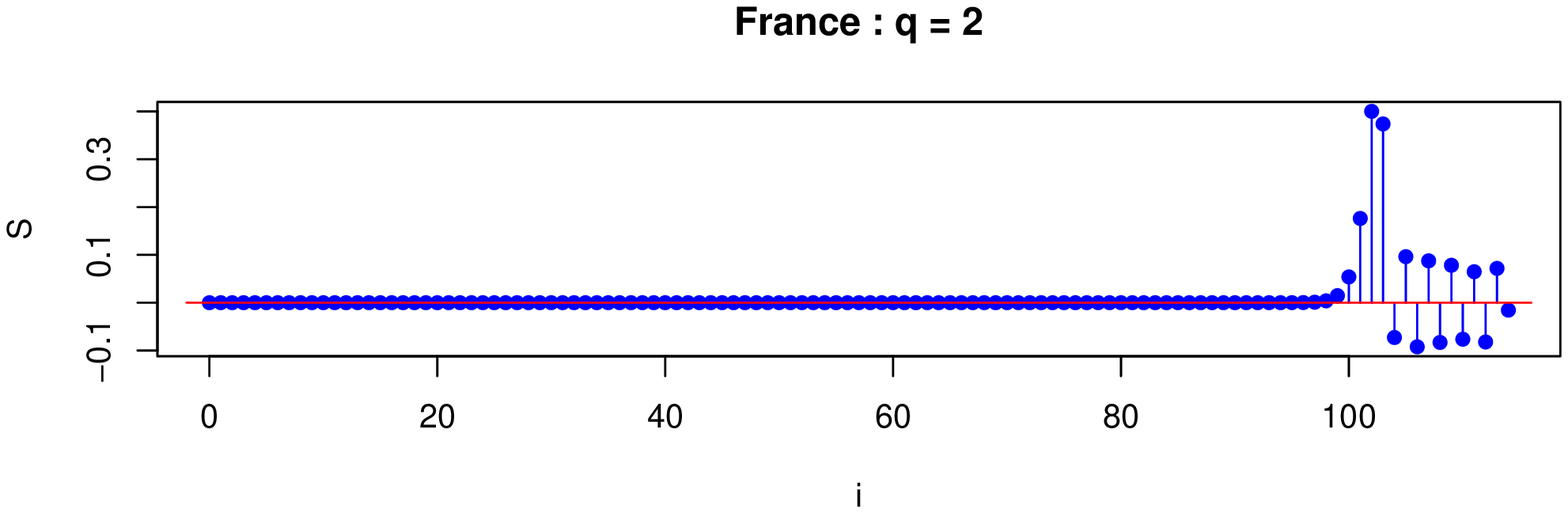}}\\
            {\includegraphics[ width=12cm]{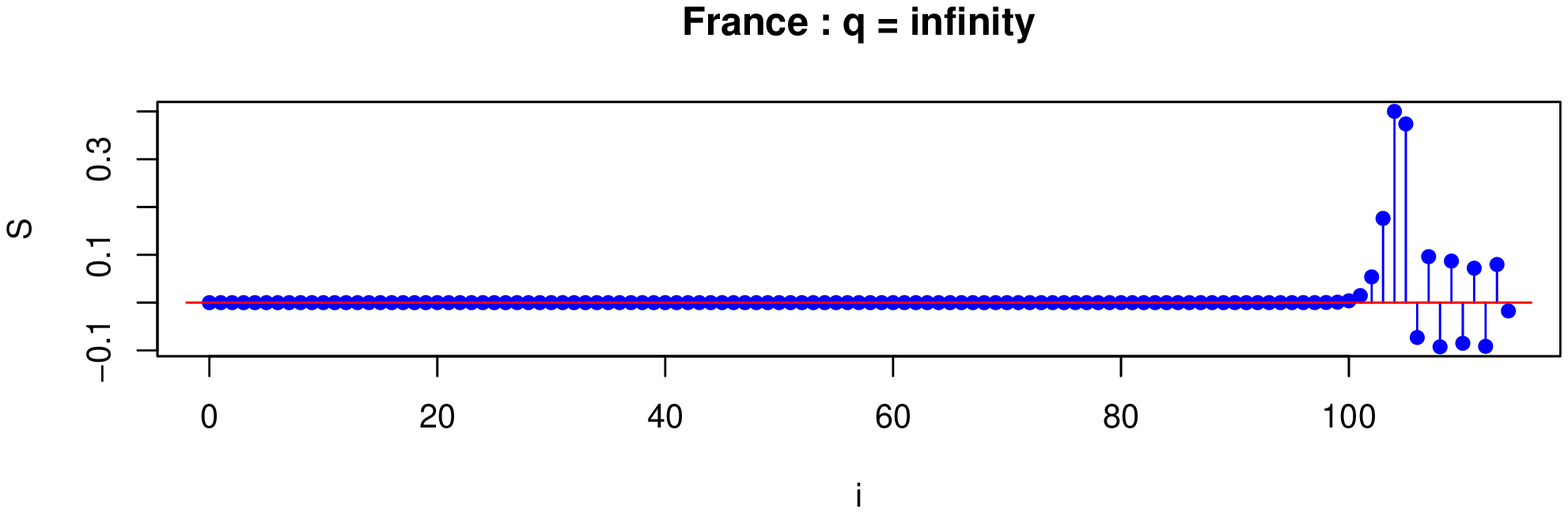}}\\

    \end{center}
\end{figure}

 \begin{figure}[H]
    \caption{The splines of $\s^{(n)}$ and its optimal predictors, with $n=114$.}
    \label{s1}
    \begin{center}
        {\includegraphics[ width=12cm]{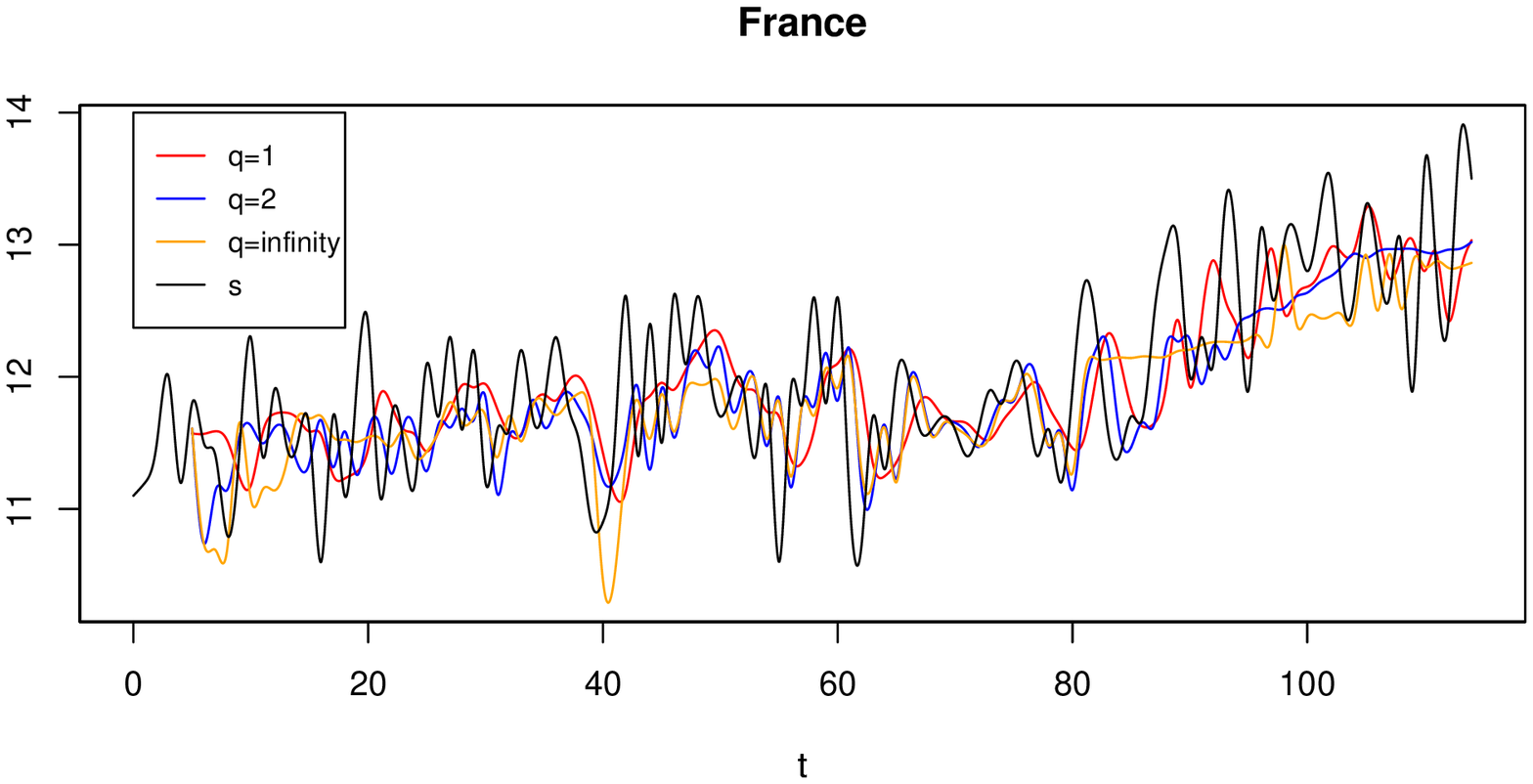}}\\
        {\includegraphics[ width=12cm]{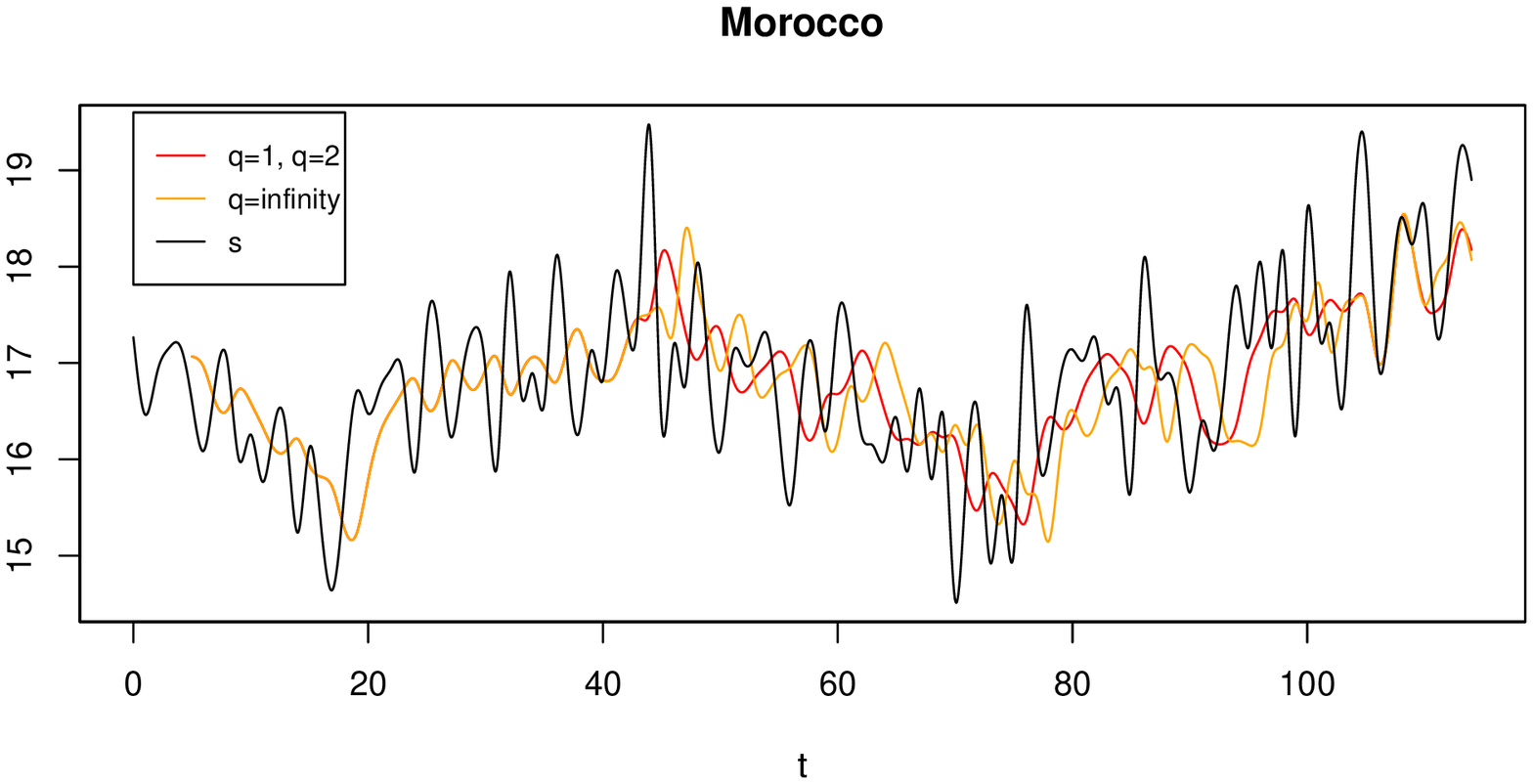}}\\

    \end{center}
 \end{figure}

 {\bf Conclusion.} Having a time series $s(0)$, $\hdots$, $s(n)$
with values in $\Rb$, we showed how to predict the value $s(n+1)$
from each parametrization of the set $\Rb^{n+1}$. We also provided optimality criteria to select the best
predictor. This work can be extended to time series $s(i)\in K$ with $K$ is any field or vector space.

\end{document}